\let\oldReturn\Return
\renewcommand{\Return}{\State\oldReturn}
\newcommand{\aeval}{\textsc{AE-VAL}\xspace}
\newcommand{\jkind}{\textsc{JKind}\xspace}
\newcommand{\jsyn}{\textsc{JSyn}\xspace}
\newcommand{\jsynvg}{\textsc{JSyn-vg}\xspace}
\newcommand{\smtlibtoc}{\textsc{SMTLib2C}\xspace}
\newcommand{\viable}{{\mathsf {Viable}}}
\newcommand{\glb}{\textit {GLB}\xspace}
\newcommand{\lub}{\textit {LUB}\xspace}
\newcommand{\tuple}[1]{\langle #1 \rangle}
\newcommand{\andrew}[1]{\textcolor{green}{Andrew: #1}}
\newcommand{\john}[1]{\textcolor{orange}{John: #1}}
\newcommand{\grigory}[1]{\textcolor{brown}{Grigory: #1}}
\newcommand{\realizable}{\textsc{realizable}\xspace}
\newcommand{\unrealizable}{\textsc{unrealizable}\xspace}
\newcommand{\skolems}{\textit{Skolem}}
\newcommand{\subs}{\textit{validRegion}}
\newcommand\eqdef{\mathrel{\stackrel{\makebox[0pt]{\mbox{\normalfont\tiny def}}}{=}}}
\newcounter{template}
\begin{document}
\title{Validity-Guided Synthesis of Reactive Systems from Assume-Guarantee Contracts}

\author{Andreas Katis\inst{1}, Grigory Fedyukovich\inst{2}, Huajun Guo\inst{1},
  Andrew Gacek\inst{3}, John Backes\inst{3}, Arie Gurfinkel\inst{4}, Michael
  W. Whalen\inst{1}}%
 
\institute{
Department of Computer Science and Engineering, University of Minnesota\\
\email{\{katis001,guoxx663\}@umn.edu, whalen@cs.umn.edu}
\and
Department of Computer Science, Princeton University\\
\email{grigoryf@cs.princeton.edu}
\and
Rockwell Collins Advanced Technology Center\\
\email{\{andrew.gacek,john.backes\}@rockwellcollins.com}
\and
Department of Electrical and Computer Engineering, University of Waterloo\\
\email{agurfinkel@uwaterloo.ca}
}


\maketitle

\begin{abstract}

Automated synthesis of reactive systems from specifications has been a topic of research for decades.  Recently, a variety of approaches have been proposed to extend synthesis of reactive systems from propositional specifications towards specifications over rich theories.
We propose a novel, completely automated approach to program synthesis which reduces the problem to deciding the validity of a set of $\forall\exists$-formulas.
In spirit of
IC3 / PDR, our problem space is recursively refined by blocking out regions of unsafe states, aiming to discover a fixpoint that describes safe reactions.
If such a fixpoint is found, we construct a witness that is directly translated into an implementation.
We implemented the algorithm on top of the \jkind model checker, and exercised it against contracts written using
the Lustre specification language.
Experimental results show how the new algorithm outperforms
\jkind's already existing synthesis procedure based on $k$-induction and addresses soundness issues in the $k$-inductive approach with respect to unrealizable results.
\end{abstract}

\section{Introduction}

Program synthesis is one of the most challenging problems in computer science. The objective is to define a process to automatically derive implementations that are guaranteed to comply with specifications expressed in the form of logic formulas. The problem has seen increased popularity in the recent years, mainly due
to the capabilities of modern symbolic solvers, including Satisfiability Modulo Theories (SMT)~\cite{BarFT-SMTLIB} tools, to compute compact and precise regions that describe under which conditions an implementation exists for the given specification~\cite{reynolds2015counterexample}.
 As a result, the problem has been well-studied for the area of propositional specifications (see Gulwani~\cite{gulwani2010dimensions} for a survey), and approaches have been proposed to tackle challenges involving richer specifications. Template-based techniques focus on synthesizing programs that
match a certain shape (the template)~\cite{srivastava2013template}, while {\em inductive synthesis} uses the idea of refining the problem space using counterexamples, to converge to a solution~\cite{flener2001inductive}.
A different category is that of \textit{functional synthesis}, in which the goal is to construct functions from pre-defined input/output relations~\cite{kuncak2013functional}.

Our goal is to effectively synthesize programs from safety specifications written in the Lustre~\cite{lustrev6} language.  These specifications are structured in the form of {\em
Assume-Guarantee} contracts, similarly to approaches in Linear Temporal Logic~\cite{ringert2017synthesis}. In prior work, we developed a solution to the synthesis problem which is based on $k$-induction~\cite{gacek2015towards,katis2016towards,KatisFGBGW16}.
Despite showing good results, the approach suffers from soundness problems with respect to unrealizable results; a contract could be declared as unrealizable, while an actual implementation exists.
In this work, we propose a novel approach that is a direct improvement over the $k$-inductive method in two important aspects: performance and generality.  On all models that can be synthesized
by $k$-induction, the new algorithm always outperforms in terms of synthesis time while yielding roughly approximate code sizes and execution times for the generated code. More importantly, the new algorithm can synthesize a strictly larger set of benchmark models,
and comes with an improved termination guarantee: unlike in $k$-induction, if the algorithm terminates with an ``unrealizable'' result, then there is no possible realization of the contract.

The technique has been used to synthesize contracts involving linear real and integer arithmetic (LIRA),
but remains generic enough to be extended into supporting additional theories
in the future, as well as to liveness properties that can be reduced to safety properties (as in $k$-liveness~\cite{claessen2012liveness}).  Our approach is completely automated and requires no guidance to the tools in terms of user interaction (unlike~\cite{ryzhyk2014user,ryzhyk2016developing}), and it is capable of providing solutions without requiring any templates, as in e.g., work by Beyene et. al.~\cite{beyene2014constraint}.  We were able to automatically solve problems that were ``hard'' and required hand-written templates specialized to the problem in~\cite{beyene2014constraint}.

The main idea of the algorithm was inspired by induction-based model checking, and in particular by IC3 / Property Directed Reachability (PDR)~\cite{bradley2011sat,een2011efficient}. In PDR, the goal is to discover an inductive invariant for a property, by recursively blocking generalized regions describing unsafe states. Similarly, we attempt
to reach a greatest fixpoint that contains states that react to arbitrary environment behavior and lead to states within the fixpoint that comply with all guarantees. Formally, the greatest fixpoint is sufficient to prove the validity of a $\forall\exists$-formula, which states that for any state and environment input, there exists a system reaction that complies with the specification. Starting from the entire
problem space, we recursively block regions of states that violate the contract, using \textit{regions of validity} that are
generated by invalid $\forall\exists$-formulas.
If the refined
$\forall\exists$-formula is valid, we reach a fixpoint which can effectively be used by the specified transition relation to
provide safe reactions to environment inputs. We then extract a witness for the
formula's satisfiability%
, which can be directly transformed into the
language intended for the system's implementation.

The algorithm was implemented as a feature in the \jkind model checker and is based on the general
concept of extracting a witness that satisfies a $\forall\exists$-formula, using
the \aeval Skolemizer~\cite{fedyukovich2015automated,KatisFGBGW16}. While \aeval was mainly used as a tool for solving queries and extracting Skolems in our $k$-inductive approach, in this paper we also take advantage of its capability to generate
\textit{regions of validity} from invalid formulas to reach a fixpoint of satisfiable assignments to state variables.

The contributions of the paper are therefore:
\begin{itemize}
    \item A novel approach to synthesis of contracts involving rich theories that is efficient, general, and completely automated (no reliance on templates or user guidance),
    \item an implementation of the approach in a branch of the \jkind model checker, and
    \item an experiment over a large suite of benchmark models demonstrating the effectiveness of the approach.
\end{itemize}

The rest of the paper is organized as follows. Sect.~\ref{sec:example} briefly describes the Cinderella-Stepmother problem that we use as an example throughout the paper. In Sect.~\ref{sec:background}, we provide the necessary formal definitions to describe the synthesis algorithm, which is presented then in Sect.~\ref{sec:synthesis}.
We present an evaluation in Sect.~\ref{sec:impl} and comparison against a method based on $k$-induction that exists using the same input language.
Finally, we discuss the differences of our work with closely related ideas in Sect.~\ref{sec:related} and conclude in Sect.~\ref{sec:conclusion}.


\section{Overview: The Cinderella-Stepmother Game}
\label{sec:example}

We illustrate the flow of the validity guided-synthesis algorithm using a variation of the minimum-backlog
problem, the two player game between Cinderella and her wicked Stepmother, first expressed by Bodlaender \textit{et al.}~\cite{bodlaender2012cinderella}.

The main objective for Cinderella (i.e. the reactive system) is to prevent a
collection of buckets from overflowing with water. On the other hand,
Cinderella's Stepmother (i.e. the system's environment) refills the buckets with a predefined amount of water that is distributed in a random fashion between the buckets.
For the running example, we chose an instance of the game that has been
previously used in template-based synthesis~\cite{beyene2014constraint}. In this instance, the game is described
using five buckets, where each bucket can contain up to two units of water.
Cinderella has the option to empty two adjacent buckets at each of her turns,
while the Stepmother distributes one unit of water over all five buckets. In the context of this paper we use this example to show how specification is expressed, as well as how we can synthesize an efficient implementation that describes reactions for Cinderella, such that a bucket overflow is always prevented.

\begin{figure}[!t]
\centering
\includegraphics[scale=0.6]{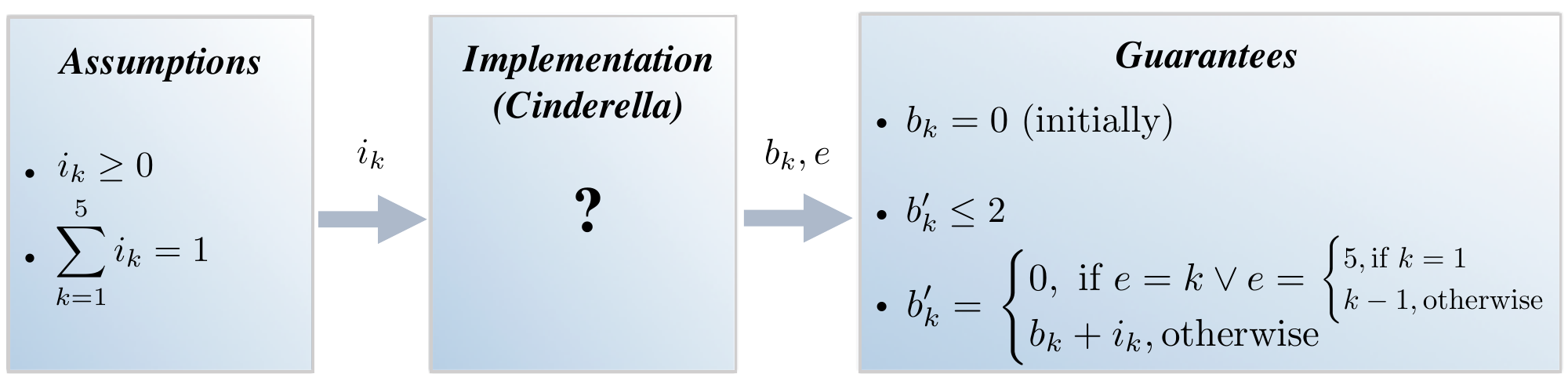}
\caption{An Assume-Guarantee contract.}
\label{fg:agcontract}
\end{figure}

We represent the system requirements using an \textit{Assume-Guarantee
Contract}. The \emph{assumptions} of the contract restrict the possible inputs that the
environment can provide to the system, while the \emph{guarantees}
describe safe reactions of the system to the outside world.

A (conceptually) simple example is shown in Fig.~\ref{fg:agcontract}. The contract describes a possible set of requirements for a specific instance of the Cinderella-Stepmother game. 
Our goal is to synthesize an implementation that describes Cinderella's winning region of the game. Cinderella in this case is the implementation, as shown by the middle box in Fig.~\ref{fg:agcontract}. Cinderella's inputs are five different values $i_k$, $1 \leq k \leq 5$, determined by a random distribution of one unit of water by the Stepmother. During each of her turns Cinderella has to make a choice denoted by the output variable $e$, such that the buckets $b_k$ do not overflow during the next action of her Stepmother. We define the contract using the set of assumptions $A$ (left box in Fig.~\ref{fg:agcontract}) and the guarantee constraints $G$ (right box in Fig.~\ref{fg:agcontract}). For the particular example, it is possible to construct at least one implementation that satisfies $G$ given $A$ which is described in Sect.~\ref{sec:algexample}.
The proof of existence of such an implementation is the main concept behind the \emph{realizability} problem, while the automated construction of a witness implementation is the main focus of \emph{program synthesis}.

Given a proof of realizability of the contract in Fig.~\ref{fg:agcontract}, 
we are seeking for an efficient synthesis procedure that could provide an implementation.
On the other hand, consider a variation of the example, where $A = \mathit{true}$. This is a practical case of an
\emph{unrealizable} contract, as there is no feasible Cinderella implementation that can correctly react to Stepmother's actions. An example counterexample allows the Stepmother to pour random amounts of water into the buckets, leading to overflow of at least one bucket during each of her turns.

\section{Background}
\label{sec:background}
We use two disjoint sets, $state$ and $inputs$, to describe a system.
A straightforward and intuitive way to represent an \emph{implementation} is by
defining a \emph{transition system}, composed of an initial state
predicate $I(s)$ of type $state \to bool$, as well as a transition relation
$T(s,i,s')$ of type $state \to inputs \to state \to bool$.

Combining the above, we represent an Assume-Guarantee (AG) contract using a set
of \emph{assumptions}, $A: state \rightarrow inputs \rightarrow bool$,
and a set of \emph{guarantees} $G$. The latter is further decomposed into two
distinct subsets $G_I: state \rightarrow bool$ and $G_T: state \rightarrow
inputs \rightarrow state \rightarrow bool$. The $G_I$ defines the set of valid
initial states, and $G_T$ contains constraints that need to be satisfied in
every transition between two states. Importantly, we
do not make any distinction between the internal state variables and the output variables in the
formalism. This allows us to use the state variables to (in some cases)
simplify the specification of guarantees since a contract
might not be always defined over all variables in the transition system.

Consequently, we can formally define a realizable contract, as one for which any
preceding state $s$ can  transition into a new state $s'$ that satisfies
the guarantees, assuming valid inputs. For a system to be ever-reactive, these
new states $s'$ should be further usable as preceding states in a future
transition. States like $s$ and $s'$ are called \textit{viable} if
and only if:
\begin{align}
\begin{split}
  \viable(s) &=
  \forall i. (A(s, i) \Rightarrow \exists s'.~ G_T(s, i,s')
\land \viable(s'))
\label{eq:viable}
\end{split}
\end{align}
This equation is recursive and we interpret it coinductively, i.e., as a
greatest fixpoint.
A necessary condition, finally, is that the intersection of sets of viable states
and initial states is non-empty. As such, to conclude that a contract
is realizable, we require that
\begin{equation}
\exists s. G_I(s) \land \viable(s)
\label{eq:nonempty}
\end{equation}

\noindent The synthesis problem is therefore to determine an initial state $s_i$ and function $f(s, i)$ such that $G_I(s_i)$ and $\forall s, i . \viable(s) \Rightarrow \viable(f(s, i))$.

The intuition behind our proposed algorithm in this paper relies on the
discovery of a fixpoint $F$ that only contains viable states.  We can determine whether $F$ is a fixpoint by proving the validity of the following formula:
\[
\forall s,i. \ (F(s) \land A(s,i) \Rightarrow \exists s'.G_{T}(s,i,s') \land F(s'))
\]

\noindent In the case where the greatest fixpoint $F$ is non-empty, we check whether it satisfies $G_{I}$ for some initial state.  If so, we proceed by extracting a witnessing initial state and witnessing skolem function $f(s, i)$ to determine $s'$ that is, by construction, guaranteed to satisfy the specification.

To achieve both the fixpoint generation and the witness extraction, we depend on \aeval, a solver for $\forall\exists$-formulas.

\subsection{Skolem functions and regions of validity}
\label{sec:aeval}


We rely on the already established algorithm to decide the validity of $\forall\exists$-formulas and extract Skolem functions, called \aeval~\cite{fedyukovich2015automated}.
It takes as input a formula $\forall x \,.\, \exists y  \,.\, \Phi (x, y)$ where $\Phi (x, y)$ is quantifier-free.
To decide its validity, \aeval first normalizes $\Phi (x, y)$ to the form $S(x) \Rightarrow T(x, y)$ and then attempts to extend all models of $S(x)$ to models of $T(x,y)$.
If such an extension is possible, then the input formula is valid, and a relationship between $x$ and $y$ are gathered in a Skolem function.
Otherwise the formula is invalid, and no Skolem function exists.
We refer the reader to~\cite{KatisFGBGW16} for more details on the Skolem-function generation.

Our approach presented in this paper relies on the fact that during each run, \aeval iteratively creates a set of formulas $\{P_i(x)\}$, such that each $P_i(x)$ has a common model with $S(x)$ and $P_i(x) \Rightarrow \exists y \,.\,T (x,y)$.
After $n$ iterations, \aeval establishes a formula $R_n(x) \eqdef \bigvee_{i=1}^n P_i(x)$ which by construction implies $\exists y\,.\,T(x,y)$.
If additionally $S(x)\Rightarrow R_n(x)$, the input formula is valid, and the algorithm terminates.
Fig.~\ref{fg:aeval} shows a Venn diagram for an example of the opposite scenario: $R_2(x) = T_1(x) \lor T_2(x)$, but the input formula is invalid.
However, models of each $S(x) \land P_i(x)$ can still be extended to a model of $T(x, y)$.

In general, if after $n$ iterations $S(x) \land T(x,y) \land \neg R_n(x)$ is unsatisfiable,
then \aeval terminates.
Note that the formula $\forall x.~ S(x) \land R_n(x) \Rightarrow \exists y .~T(x,y)$ is valid by construction at any iteration of the algorithm.
%
We say that $R_n(x)$ is a \emph{region of validity}, and in this work, we are interested in the \emph{maximal} regions of validity, i.e., the ones produced by disjoining all $\{P_i(x)\}$ produced by \aeval before termination and by conjoining it with $S(x)$.
Throughout the paper, we assume that all regions of validity are maximal.



\begin{figure}[!t]
\centering
\includegraphics[scale=0.47]{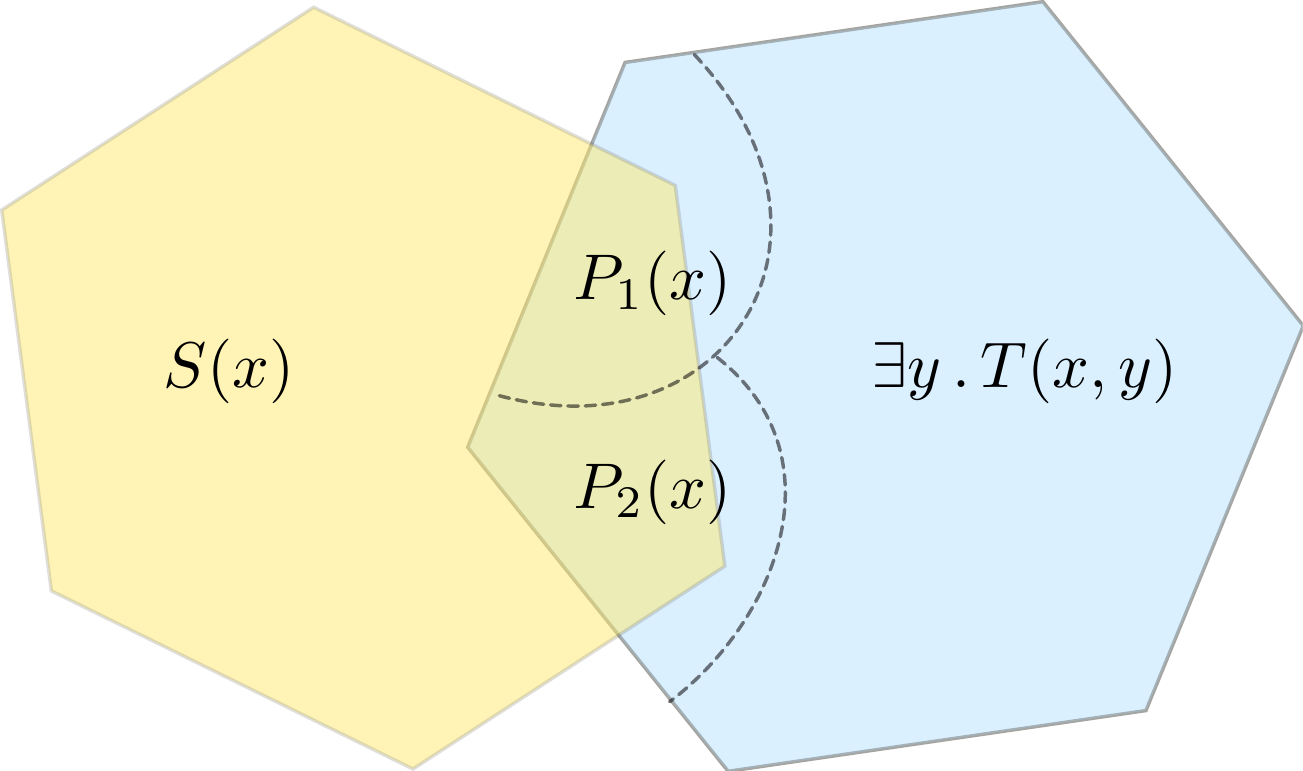}
\caption{Region of validity computed for an example requiring \aeval to iterate two times.}
\label{fg:aeval}
\end{figure}

\begin{lemma}\label{lem:aeval}
  Let $R_n(x)$ be the region of validity returned by \aeval for  formula $\forall
  s.~ S(x) \Rightarrow \exists y\,.\,T(x,y)$. Then
$  \forall x.~ S(x) \Rightarrow (R_n(x) \Leftrightarrow \exists y\,.\,T(x,y))$.
\end{lemma}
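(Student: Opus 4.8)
The plan is to establish the two directions of the biconditional separately, both under the standing hypothesis $S(x)$, using exactly the two invariants of \aeval recalled in the paragraphs preceding the lemma: (i) at \emph{every} iteration the formula $\forall x.~S(x)\land R_n(x)\Rightarrow \exists y.~T(x,y)$ is valid \emph{by construction} (each $P_i$ is added only after checking $P_i(x)\Rightarrow\exists y.~T(x,y)$, and $R_n=\bigvee_{i=1}^n P_i$); and (ii) \aeval returns $R_n(x)$ only once $S(x)\land T(x,y)\land\neg R_n(x)$ is unsatisfiable.

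For the forward direction, fix any $x$ with $S(x)$ and $R_n(x)$. Since $R_n(x)=\bigvee_{i=1}^n P_i(x)$, some disjunct $P_i(x)$ holds, and by construction $P_i(x)\Rightarrow\exists y.~T(x,y)$, whence $\exists y.~T(x,y)$. (Alternatively one invokes invariant~(i) directly.) For the backward direction, fix $x$ with $S(x)$ and $\exists y.~T(x,y)$, and suppose toward a contradiction that $\neg R_n(x)$. Picking a witness $y_0$ with $T(x,y_0)$, the assignment $(x,y_0)$ satisfies $S(x)\land T(x,y)\land\neg R_n(x)$, so this formula is satisfiable --- contradicting invariant~(ii), which holds because \aeval did return $R_n(x)$. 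Hence $R_n(x)$. Combining the two directions, $S(x)\Rightarrow\bigl(R_n(x)\Leftrightarrow\exists y.~T(x,y)\bigr)$ for every $x$, which is the claim.

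The argument is almost pure bookkeeping; the entire mathematical content sits in the two \aeval invariants already stated. The one point that needs a little care --- and the only place I would slow down --- is justifying that invariant~(ii) is available regardless of \emph{how} \aeval exited. \aeval has two exit conditions: the ``valid'' case, where it has established $S(x)\Rightarrow R_n(x)$, and the ``invalid'' case. In the valid case $S(x)\land\neg R_n(x)$ is already unsatisfiable, so a fortiori $S(x)\land T(x,y)\land\neg R_n(x)$ is; in the invalid case the unsatisfiability of $S(x)\land T(x,y)\land\neg R_n(x)$ is precisely the stopping criterion. The excerpt's sentence ``In general, if after $n$ iterations $S(x)\land T(x,y)\land\neg R_n(x)$ is unsatisfiable, then \aeval terminates'' is exactly what packages both cases into the single invariant~(ii) used above, so no case split on the exit reason is actually needed in the proof.
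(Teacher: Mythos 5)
Your proof is correct and follows essentially the same route as the paper's: the forward direction is discharged by the construction of $R_n(x)$ as a disjunction of the $P_i(x)$, and the backward direction proceeds by contradiction against the termination condition, exhibiting a satisfying assignment of $S(x)\land T(x,y)\land\neg R_n(x)$. Your additional remark that the termination invariant covers both exit modes of \aeval is a sensible piece of extra care but does not change the argument.
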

\begin{proof}
  ($\Rightarrow$) By construction of $R_n(x)$.

  ($\Leftarrow$) Suppose towards contradiction that the formula does
  not hold. Then there exists $x_0$ such that $S(x_0) \land (\exists
  y. T(x_0, y)) \land \neg R_n(x_0)$ holds. But this is a direct
  contradiction for the termination condition for \aeval. Therefore
  the original formula does hold.
\end{proof}





\section{Validity-Guided Synthesis from Assume-Guarantee Contracts}
\label{sec:synthesis}

\begin{algorithm*}[!t]
\caption{\jsynvg (A : assumptions, G : guarantees)}
\label{alg:synthesis}
\begin{algorithmic}[1]
	\State $F(s) \gets \mathit{true}$\label{alg:init};\Comment{Fixpoint of viable states}
	\While{$\mathit{true}$}
		\State $\phi \gets \forall s,i. \ (F(s) \land A(s,i) \Rightarrow \exists s'.G_{T}(s,i,s') \land F(s'))$\label{alg:ae1};
		\State $\tuple{\mathit{valid}, \subs, \skolems} \gets \aeval(\phi) \label{alg:val1}$;
		\If{$\mathit{valid}$\label{alg:val2}}
            \If{$\exists s . G_{I}(s) \land F(s)$}
				\Return $\tuple{\realizable, \skolems, s, F}$\label{alg:issat};
			\Else{\Comment{Empty set of initial or viable states}}
		 		\Return $\unrealizable$\label{alg:unreal};
		 	\EndIf
		\Else{\Comment{Extract region of validity $Q(s,i)$}}
			\State $Q(s,i) \gets \subs$\label{alg:valreg};
			\State $\phi' \gets \forall s. \ (F(s) \Rightarrow \exists i. A(s,i) \land \lnot
			Q(s,i))$\label{alg:ae2};
			\State $\tuple{\_, \mathit{violatingRegion}, \_} \gets \aeval(\phi')$;
				\State $W(s) \gets \mathit{violatingRegion}$;
			\State $F(s) \gets F(s) \land \lnot W(s)$\label{alg:rem};\Comment{Refine set of viable states}	
			
				
		\EndIf
	\EndWhile
\end{algorithmic}
\end{algorithm*}

Alg.~\ref{alg:synthesis}, named \jsynvg (for {\em validity guided}), shows the validity-guided technique that we use towards the automatic synthesis of implementations. The specification is written using the Assume-Guarantee convention that we described in Section~\ref{sec:background} and is provided as an input.
The algorithm relies on \aeval, for each call of which we write $\tuple{x, y, z} \gets \aeval(\ldots)$: $x$ specifies if the given formula is $\mathit{valid}$ or $\mathit{invalid}$, $y$ identifies the region of validity (in both cases), and $z$ -- the Skolem function (only in case of the validity).

The algorithm maintains a formula $F(s)$ which is initially assigned $\mathit{true}$ (line~\ref{alg:init}).
It then attempts to strengthen $F(s)$ until it only contains viable states (recall Eqs.~\ref{eq:viable}
and~\ref{eq:nonempty}), i.e., a greatest fixpoint is reached.
We first encode Eq.~\ref{eq:viable} in a formula $\phi$ and then provide it as input to \aeval (line~\ref{alg:val1}) which determines its validity (line~\ref{alg:val2}).
If the formula is valid, then a witness $\skolems$ is non-empty.
By construction, it contains valid assignments to the existentially quantified variables of $\phi$.
In the context of viability, this witness is capable of providing viable states that can be used as a safe
reaction, given an input that satisfies the assumptions.

With the valid formula $\phi$ in hand, it remains to check that the fixpoint intersects with the initial states, i.e., to find a model of formula in Eq.~\ref{eq:nonempty} by a simple satisfiability check.
If a model exists, it is directly combined with the extracted witness and used towards an implementation of the system, and the algorithm terminates (line~\ref{alg:issat}).
Otherwise, the contract is unrealizable since either there are no states that satisfy the
initial state guarantees $G_I$, or the set of viable states $F$ is empty.

If $\phi$ is not true for every possible assignment of the universally
quantified variables, \aeval provides a \textit{region of validity} $Q(s,i)$
(line~\ref{alg:valreg}).
At this point, one might assume that $Q(s,i)$ is sufficient to restrict $F$ towards a solution. This is not the case since $Q(s,i)$ creates a subregion
involving both state and input variables. As such, it may contain constraints
over the contract's inputs above what are required by $A$, ultimately leading to implementations that only work correctly for a small part of the input domain.

Fortunately, we can again use \aeval's capability of providing regions of validity
towards removing inputs from $Q$.  Essentially, we want to remove those states from $Q$ if even one input causes them to violate the formula on line~\ref{alg:ae1}.  We denote by $W$ the {\em violating region} of $Q$.  To construct $W$, \aeval  determines
the validity of formula $\phi' \gets \forall s. \ (F(s) \Rightarrow \exists
i. A(s,i) \land \lnot Q(s,i))$ (line~\ref{alg:ae2}) and computes
a new region of validity.

If $\phi'$ is invalid, it indicates that there are still non-violating states (i.e., outside $W$) that may lead to a fixpoint.
Thus, the algorithm removes the unsafe states from $F(s)$ in line~\ref{alg:rem}, and iterates until a greatest fixpoint for $F(s)$ is reached.  
If $\phi'$ is valid, then every state in $F(s)$ is unsafe, under a specific input that satisfies the contract assumptions (since $\lnot Q(s,i)$ holds in this case), and the specification is unrealizable (i.e., in the next iteration, the algorithm will reach line~\ref{alg:unreal}).

\subsection{Soundness}
\label{sec:soundness}

\begin{lemma}
  $\viable \Rightarrow F$ is an invariant for
  Alg.~\ref{alg:synthesis}.
\label{lem:alg1-viable}
\end{lemma}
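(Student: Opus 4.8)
The plan is to prove the statement by induction on the loop iterations of Alg.~\ref{alg:synthesis}, treating $\viable \Rightarrow F$ as a loop invariant attached to line~\ref{alg:ae1}. Since the deleted (\texttt{iffalse}) version of this lemma in the excerpt already sketches essentially this argument, I would follow that skeleton but streamline it. The base case is immediate: before the first iteration $F(s) = \mathit{true}$, so $\viable(s) \Rightarrow F(s)$ holds vacuously. For the inductive step, assume $\forall s.\,\viable(s) \Rightarrow F(s)$ at the top of the loop; I must show the invariant survives whatever the loop body does. The branches that return at lines~\ref{alg:issat} and~\ref{alg:unreal} do not modify $F$, so nothing to check there. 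The only interesting case is the \textbf{else} branch ending at line~\ref{alg:rem}, where $F$ is updated to $F(s) \land \lnot W(s)$; here I must show $\forall s.\,\viable(s) \Rightarrow (F(s) \land \lnot W(s))$, and given the induction hypothesis this reduces to showing $\viable(s) \Rightarrow \lnot W(s)$.

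First I would unfold the definition of $W$. On line~\ref{alg:ae2}, \aeval is run on $\phi' \equiv \forall s.\,(F(s) \Rightarrow \exists i.\,A(s,i) \land \lnot Q(s,i))$, and $W(s)$ is set to the returned region of validity (\emph{violatingRegion}). By Lemma~\ref{lem:aeval} applied to $\phi'$ with $S(s) \equiv F(s)$ and the body $\exists i.\,A(s,i)\land\lnot Q(s,i)$, we get $\forall s.\,F(s) \Rightarrow (W(s) \Leftrightarrow \exists i.\,A(s,i)\land\lnot Q(s,i))$. So if $s_0$ satisfies $F(s_0)$ and $W(s_0)$, there is some $i_0$ with $A(s_0,i_0) \land \lnot Q(s_0,i_0)$. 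Next I would unfold $Q$: it is the region of validity returned by \aeval on line~\ref{alg:val1} for $\phi \equiv \forall s,i.\,(F(s)\land A(s,i) \Rightarrow \exists s'.\,G_T(s,i,s')\land F(s'))$, and again by Lemma~\ref{lem:aeval}, $\neg Q(s_0,i_0)$ together with $F(s_0)$ and $A(s_0,i_0)$ forces $\lnot(\exists s'.\,G_T(s_0,i_0,s')\land F(s'))$.

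Now I would close the contradiction using viability. Suppose $\viable(s_0)$. By the induction hypothesis $F(s_0)$, so the above shows: if additionally $W(s_0)$, then there is $i_0$ with $A(s_0,i_0)$ but no $s'$ with $G_T(s_0,i_0,s')\land F(s')$. On the other hand, unfolding $\viable(s_0) = \forall i.\,(A(s_0,i) \Rightarrow \exists s'.\,G_T(s_0,i,s')\land\viable(s'))$ and instantiating at $i_0$, from $A(s_0,i_0)$ we obtain $s'$ with $G_T(s_0,i_0,s')\land\viable(s')$; applying the induction hypothesis to $s'$ gives $F(s')$, hence $G_T(s_0,i_0,s')\land F(s')$ — a direct contradiction. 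Therefore $\lnot W(s_0)$, so $\viable(s_0) \Rightarrow F(s_0)\land\lnot W(s_0)$, and since $s_0$ was arbitrary the invariant is re-established.

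The main obstacle — and the step to be careful about — is the precise bookkeeping in the two applications of Lemma~\ref{lem:aeval}: matching the $\phi$ and $\phi'$ of the algorithm to the normalized $S(x) \Rightarrow \exists y.\,T(x,y)$ shape that the lemma assumes, and correctly identifying which conjuncts play the role of $S(x)$ (namely $F(s)$ for $\phi'$, and $F(s)\land A(s,i)$ for $\phi$). A secondary subtlety worth a sentence is justifying that we may instantiate the coinductively-defined $\viable$ by simply unfolding it once; this is sound because $\viable$ is a greatest fixpoint, so $\viable(s) \Rightarrow \forall i.\,(A(s,i)\Rightarrow\exists s'.\,G_T(s,i,s')\land\viable(s'))$ holds by definition. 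Everything else is routine propositional simplification.
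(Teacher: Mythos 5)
Your proposal is correct and follows essentially the same route as the paper's own proof: both reduce to the assignment of $F$ at line~\ref{alg:rem}, apply Lemma~\ref{lem:aeval} once to $W$ (via $\phi'$) to obtain an input $i_0$ with $A(s_0,i_0)\land\lnot Q(s_0,i_0)$, apply it again to $Q$ (via $\phi$) to conclude no $s'$ satisfies $G_T(s_0,i_0,s')\land F(s')$, and then derive a contradiction with $\viable(s_0)$ using the induction hypothesis. The only cosmetic differences are your explicit remark justifying the one-step unfolding of the coinductive definition of $\viable$ and your use of the full biconditional form of Lemma~\ref{lem:aeval} where the paper uses only the needed direction.
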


\begin{proof}
  It suffices to show this invariant holds each time $F$ is assigned.
  On line~\ref{alg:init}, this is trivial. For line~\ref{alg:rem}, we can assume that
  $\viable \Rightarrow F$ holds prior to this line. Suppose towards
  contradiction that the assignment on line~\ref{alg:rem} violates the invariant.
  Then there exists $s_0$ such that $F(s_0)$, $W(s_0)$, and
  $\viable(s_0)$ all hold. Since $W$ is the region of validity for
  $\phi'$ on line~\ref{alg:ae2}, we have
  $W(s_0) \land F(s_0) \Rightarrow \exists i. A(s_0, i) \land \neg Q(s_0, i)$
  by Lemma~\ref{lem:aeval}. Given that $W(s_0)$ and $F(s_0)$ hold, let $i_0$
  be such that $A(s_0, i_0)$ and $\neg Q(s_0, i_0)$ hold. Since $Q$ is the
  region of validity for $\phi$ on line~\ref{alg:ae1}, we have
  $F(s_0) \land A(s_0, i_0) \land \exists s'. G_T(s_0, i_0, s') \land F(s') \Rightarrow Q(s_0, i_0)$
  by Lemma~\ref{lem:aeval}.
  Since $F(s_0)$, $A(s_0, i_0)$ and $\neg Q(s_0, i_0)$ hold, we conclude that
  $\exists s'. G_T(s_0, i_0, s') \land F(s') \Rightarrow \bot$.
  We know that
  $\viable \Rightarrow F$ holds prior to line~\ref{alg:rem}, thus
  $\exists s'. G_T(s_0, i_0, s') \land \viable(s')\Rightarrow \bot$. But this is a
  contradiction since $\viable(s_0)$ holds. Therefore the invariant holds on
  line~\ref{alg:rem}.
\end{proof}

\begin{theorem}
  The \realizable and \unrealizable results of
  Alg.~\ref{alg:synthesis} are sound.
\end{theorem}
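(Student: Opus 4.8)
The plan is to read the theorem as a partial-correctness claim --- whenever Alg.~\ref{alg:synthesis} returns, the verdict is correct --- and to reason separately about its two exit points: line~\ref{alg:issat} (the \realizable verdict together with the returned witness) and line~\ref{alg:unreal} (the \unrealizable verdict). The common ingredients are the loop invariant $\viable\Rightarrow F$ from Lemma~\ref{lem:alg1-viable}, the characterization of regions of validity in Lemma~\ref{lem:aeval}, and the coinductive reading of $\viable$ as a greatest fixpoint from Eq.~\ref{eq:viable}.

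First I would dispatch the \unrealizable case, which is the easier one. Line~\ref{alg:unreal} is reached exactly when $\phi$ is valid and the guard $\exists s.\,G_I(s)\wedge F(s)$ fails, i.e. $\forall s.\,G_I(s)\Rightarrow\neg F(s)$. By Lemma~\ref{lem:alg1-viable}, $\viable\Rightarrow F$, so any $s$ satisfying $G_I(s)\wedge\viable(s)$ would also satisfy $G_I(s)\wedge F(s)$; hence $\neg\exists s.\,G_I(s)\wedge\viable(s)$, which is precisely the negation of the realizability condition Eq.~\ref{eq:nonempty}. I would also remark that every \unrealizable output passes through line~\ref{alg:unreal}: if $\phi'$ on line~\ref{alg:ae2} is valid then, by Lemma~\ref{lem:aeval} applied to $\phi'$, the (maximal) region of validity $W$ satisfies $F\Rightarrow W$, so line~\ref{alg:rem} replaces $F$ by $\mathit{false}$, and on the following iteration $\phi$ is vacuously valid while the $G_I$-guard fails, so the algorithm falls through to line~\ref{alg:unreal}. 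Thus the single argument above covers all \unrealizable results.

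Next I would handle the \realizable case. At line~\ref{alg:issat}, $\phi$ is valid, i.e.
\[
\forall s.\ \Bigl(F(s)\Rightarrow\forall i.\ \bigl(A(s,i)\Rightarrow\exists s'.\,G_T(s,i,s')\wedge F(s')\bigr)\Bigr),
\]
which says exactly that $F$ is a post-fixpoint of the monotone operator $\Psi$ with $\Psi(X)(s)\eqdef\forall i.\,(A(s,i)\Rightarrow\exists s'.\,G_T(s,i,s')\wedge X(s'))$, whose greatest fixpoint is $\viable$ by Eq.~\ref{eq:viable}. By the coinduction principle (post-fixpoints lie below the greatest fixpoint) we get $F\Rightarrow\viable$, and combined with the guard $\exists s.\,G_I(s)\wedge F(s)$ this yields Eq.~\ref{eq:nonempty}, so the contract is realizable. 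To see that the returned tuple is a genuine witness, I would use that validity of $\phi$ also yields a Skolem function $f$ with $\forall s,i.\,(F(s)\wedge A(s,i)\Rightarrow G_T(s,i,f(s,i))\wedge F(f(s,i)))$; taking the initial state $s_0$ from the guard (so $G_I(s_0)\wedge F(s_0)$) and defining the implementation $\tuple{s_0,f}$, an induction over the length of runs shows that every state reached under assumption-satisfying inputs stays in $F$, hence in $\viable$, with $G_T$ satisfied at each step and $G_I$ at the start --- i.e. $\tuple{s_0,f}$ realizes the contract.

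The main obstacle is the coinduction step in the \realizable case: it requires making the fixpoint framework precise --- exhibiting $\Psi$, checking monotonicity (immediate, since $X$ occurs only positively under $\exists$ and $\wedge$), matching validity of $\phi$ with $F\Rightarrow\Psi(F)$, and invoking Knaster--Tarski. A secondary subtlety is the witness argument: \aeval returns a term for the existential variable of $\phi$, and one must check it is a function of $(s,i)$ and that the coinductive ``ever-reactive'' part of viability is preserved along all runs --- which follows from closure of $F$ under $f$ but should be stated, not left implicit. The remaining steps are routine bookkeeping with Lemmas~\ref{lem:aeval} and~\ref{lem:alg1-viable}.
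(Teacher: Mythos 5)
Your proposal is correct and follows essentially the same route as the paper's proof: both rest on Lemma~\ref{lem:alg1-viable} for the direction $\viable\Rightarrow F$ and on the Knaster--Tarski/coinduction principle (validity of $\phi$ makes $F$ a post-fixpoint of the monotone operator defining $\viable$, hence $F\Rightarrow\viable$), so that the guard $\exists s.\,G_I(s)\wedge F(s)$ coincides with Eq.~\ref{eq:nonempty}. The only differences are presentational --- you split the argument by exit point where the paper concludes $F=\viable$ once and reads off both verdicts, and you add a discussion of the Skolem witness that the paper omits from this proof.
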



\begin{proof}
If Alg.~\ref{alg:synthesis} terminates, then the
formula for $\phi$ on line~\ref{alg:ae1} is valid. Rewritten, $F$
satisfies the formula
\begin{equation}
  \forall s.~F(s) \Rightarrow \left(\forall i.~ A(s,i) \Rightarrow \exists
    s'.G_{T}(s,i,s') \land F(s')\right).
  \label{eq:F-rewritten}
\end{equation}
Let the function $f$ be defined over state predicates as
  \begin{equation}
    f = \lambda V. \lambda s.~ \forall i.~ A(s,i) \Rightarrow \exists s'.G_{T}(s,i,s') \land V(s').
    \label{eq:f-fixed-point}
  \end{equation}
  State predicates are equivalent to subsets of the state space and
  form a lattice in the natural way. Moreover, $f$ is monotone on this
  lattice. From Eq.~\ref{eq:F-rewritten} we have
  $F \Rightarrow f(F)$. Thus $F$ is a post-fixed point of $f$. In
  Eq.~\ref{eq:viable}, $\viable$ is defined as the greatest
  fixed-point of $f$. Thus $f \Rightarrow \viable$ by the Knaster-Tarski
  theorem. Combining this with Lemma~\ref{lem:alg1-viable}, we have
  $F = \viable$. Therefore the check on line~\ref{alg:issat} is equivalent to the
  check in Eq.~\ref{eq:nonempty} for realizability.
\end{proof}

\subsection{Termination on finite models}
\label{sec:termfinal}
\begin{lemma}
Every loop iteration in Alg.~\ref{alg:synthesis} either
terminates or removes at least one state from $F$.
\label{lem:progress}
\end{lemma}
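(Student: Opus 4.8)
The plan is to analyze the single loop body of Alg.~\ref{alg:synthesis} and show that each possible control path through it either exits the loop or strictly shrinks $F$. First I would observe that the loop body begins with the \aeval call on $\phi$ (line~\ref{alg:ae1}), which returns either $\mathit{valid}$ or $\mathit{invalid}$. If it returns $\mathit{valid}$, then control proceeds into the \texttt{if}-branch on line~\ref{alg:val2}, and regardless of whether $\exists s.\,G_I(s)\land F(s)$ holds, the algorithm returns (line~\ref{alg:issat} or line~\ref{alg:unreal}), so the iteration terminates. Hence the only case requiring work is when $\phi$ is $\mathit{invalid}$, i.e., the \texttt{else}-branch starting at line~\ref{alg:valreg}.

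In the $\mathit{invalid}$ case, \aeval returns a (maximal) region of validity $Q(s,i)$. The key structural fact is that since $\phi$ is invalid, by Lemma~\ref{lem:aeval} the region $Q$ is a proper subset of the normalized antecedent: there must exist a witness pair $(s_0,i_0)$ with $F(s_0)$, $A(s_0,i_0)$, and $\lnot\bigl(\exists s'.\,G_T(s_0,i_0,s')\land F(s')\bigr)$, and for this pair $\lnot Q(s_0,i_0)$ holds (otherwise $Q$ would cover the full antecedent and $\phi$ would be valid). This $s_0$ is precisely a state witnessing that $\phi'$ on line~\ref{alg:ae2}, namely $\forall s.\,(F(s)\Rightarrow\exists i.\,A(s,i)\land\lnot Q(s,i))$, has $s_0$ in its normalized antecedent satisfied by the existential: $A(s_0,i_0)\land\lnot Q(s_0,i_0)$. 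Therefore the region of validity $W(s)$ returned by \aeval for $\phi'$ is nonempty and contains $s_0$ (by maximality of the region of validity, any state whose existential witness is found is included; and $s_0$'s witness $i_0$ is found because the quantifier-free matrix is satisfiable at $(s_0,i_0)$). After line~\ref{alg:rem}, $F$ becomes $F\land\lnot W$, which no longer contains $s_0$; since $F(s_0)$ held before, this removes at least one state.

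The main obstacle I anticipate is the step asserting that $s_0\in W$ — i.e., that \aeval's region of validity for $\phi'$ actually captures the state $s_0$ we identified. This relies on the assumption (stated in Sect.~\ref{sec:aeval}) that all regions of validity are \emph{maximal}: \aeval disjoins all the $P_i$ it produces before termination, and termination for $\phi'$ occurs only when $S(s)\land T(s,i)\land\lnot R_n(s)$ is unsatisfiable, so any $s$ for which $\exists i.\,A(s,i)\land\lnot Q(s,i)$ is satisfiable (with $F(s)$) must lie in $R_n = W$. I would make this explicit by invoking Lemma~\ref{lem:aeval} in the direction $S(s)\Rightarrow(R_n(s)\Leftrightarrow\exists i.\,T(s,i))$, instantiated at $s_0$, to conclude $W(s_0)$ from $F(s_0)$ and the witness $i_0$. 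A secondary subtlety worth a sentence is handling the degenerate case where $W$ could in principle be all of $F$ — that is fine for this lemma, since removing "at least one state" is still satisfied, and the subsequent \unrealizable detection is handled by the soundness argument rather than here.
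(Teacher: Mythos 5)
Your proof is correct and follows essentially the same route as the paper's: both extract a counterexample pair $(s_0,i_0)$ from the invalidity of $\phi$, use Lemma~\ref{lem:aeval} applied to $Q$ to conclude $\lnot Q(s_0,i_0)$, and then place $s_0$ in the violating region $W$ for $\phi'$. Your direct derivation of $W(s_0)$ from maximality and Lemma~\ref{lem:aeval} in fact subsumes the paper's separate proof-by-contradiction step establishing $F\cap W\neq\varnothing$, so your version is, if anything, slightly more streamlined.
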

\begin{proof}
  It suffices to show that at least one state is removed from $F$ on
  line~\ref{alg:rem}. That is, we want to show that $F \cap W \neq \varnothing$ since
  this intersection is what is removed from $F$ by line~\ref{alg:rem}. 
  
  If the query on line~\ref{alg:val1} is valid, then the algorithm terminates.   If not, then there exists a state $s^{*}$ and input $i^{*}$ such that $F(s^{*})$ and $A(s^{*}, i^{*})$ such that there is no state $s'$ where both $G(s^{*}, i^{*}, s')$ and $F(s')$ hold.  Thus, $\lnot Q(s^{*}, i^{*})$, and $s^{*} \in \mathit{violatingRegion}$, so $W \neq \varnothing$.  Next, suppose towards contradiction that $F \cap W = \varnothing$ and $W \neq \varnothing$. Since $W$ is the
  region of validity for $\phi'$ on line~\ref{alg:ae2}, we know that $F$ lies
  completely outside the region of validity and therefore
  $\forall s.~ \neg \exists i. A(s,i) \land \neg Q(s, i)$
  by Lemma~\ref{lem:aeval}. Rewritten,
  $\forall s, i.~ A(s, i) \Rightarrow Q(s, i)$. Note that $Q$ is the
  region of validity for $\phi$ on line~\ref{alg:ae1}. Thus $A$ is completely
  contained within the region of validity and formula $\phi$ is valid.
  This is a contradiction since if $\phi$ is valid then line~\ref{alg:rem} will
  not be executed in this iteration of the loop. Therefore
  $F \cap W \neq \varnothing$ and at least one state is removed from $F$
  on line~\ref{alg:rem}.
\end{proof}

\begin{theorem}
For finite models, Alg.~\ref{alg:synthesis} terminates.
\end{theorem}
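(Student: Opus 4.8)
The plan is to combine Lemma~\ref{lem:progress} with the finiteness of the state space in a straightforward well-foundedness argument. First I would observe that $F$ is only ever modified on line~\ref{alg:rem}, where it is replaced by $F \land \lnot W$; hence the sequence of values taken by $F$ across loop iterations is monotonically non-increasing when viewed as a subset of $state$. For finite models $state$ is a finite set, so any strictly descending chain of such subsets has length at most $|state|$. Lemma~\ref{lem:progress} tells us that each iteration that does not terminate strictly decreases $F$ (it removes at least one state). Therefore the loop cannot execute more than $|state|$ times without terminating.

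Next I would dispatch the one boundary case this argument leaves open: that the descending chain reaches $F = \varnothing$ before the algorithm has returned. When $F$ is empty, the formula $\phi$ on line~\ref{alg:ae1} has the shape $\forall s,i.\ (F(s) \land A(s,i) \Rightarrow \ldots)$ with an unsatisfiable antecedent, so $\phi$ is (vacuously) valid and \aeval returns $\mathit{valid} = \mathit{true}$. Control then reaches the test on line~\ref{alg:issat}; since $\exists s.\ G_I(s) \land F(s)$ is likewise unsatisfiable, the algorithm returns \unrealizable on line~\ref{alg:unreal} and halts. Together with the iteration bound from the previous paragraph, this shows the outer loop terminates after finitely many iterations.

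The remaining thing to be careful about — and the closest this proof comes to an obstacle — is the implicit assumption that each individual call to \aeval on lines~\ref{alg:val1} and~\ref{alg:ae2} terminates; the outer-loop bound is only meaningful if every iteration itself completes. I would note that this holds for the theories considered here on finite models, since \aeval decides validity by the finite model-extension procedure described in Sect.~\ref{sec:aeval}. Everything else is an immediate consequence of Lemma~\ref{lem:progress} and the finite-descending-chain condition, so I do not expect any genuine difficulty beyond stating the empty-$F$ case cleanly.
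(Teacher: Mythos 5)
Your proposal is correct and follows essentially the same route as the paper, whose proof is a one-liner: the theorem follows immediately from Lemma~\ref{lem:progress} together with the fact that \aeval terminates on finite models. Your additional details --- the explicit descending-chain bound of $|state|$ iterations and the handling of the $F = \varnothing$ case via the vacuously valid $\phi$ and the \unrealizable return --- are sound elaborations of what the paper leaves implicit, not a different argument.
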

\begin{proof}
Immediately from Lemma~\ref{lem:progress} and the fact that \aeval terminates on finite models~\cite{fedyukovich2015automated}.
\end{proof}

\begin{figure}[!t]
\centering
 \begin{Verbatim}[fontsize=\scriptsize]
const C = 2.0;

-- empty buckets e and e+1 each round
node game(i1,i2,i3,i4,i5: real; e: int) returns (guarantee: bool);
var
  b1, b2, b3, b4, b5 : real;
let
  assert i1 >= 0.0 and i2 >= 0.0 and i3 >= 0.0 and i4 >= 0.0 and i5 >= 0.0;
  assert i1 + i2 + i3 + i4 + i5 = 1.0;

  b1 = 0.0 -> (if (e = 5 or e = 1) then i1 else (pre(b1) + i1));
  b2 = 0.0 -> (if (e = 1 or e = 2) then i2 else (pre(b2) + i2));
  b3 = 0.0 -> (if (e = 2 or e = 3) then i3 else (pre(b3) + i3));
  b4 = 0.0 -> (if (e = 3 or e = 4) then i4 else (pre(b4) + i4));
  b5 = 0.0 -> (if (e = 4 or e = 5) then i5 else (pre(b5) + i5));

  guarantee = b1 <= C and b2 <= C and b3 <= C and b4 <= C and b5 <= C;

  --%REALIZABLE i1, i2, i3, i4, i5;
  --%PROPERTY guarantee;
tel;
 \end{Verbatim}
\vspace{-1em}
\caption{An Assume-Guarantee contract for the Cinderella-Stepmother game in Lustre.}

\label{fg:cind}
\end{figure}

\subsection{Applying \jsynvg to the Cinderella-Stepmother game}
\label{sec:algexample}

Fig.~\ref{fg:cind} shows one possible interpretation of the contract designed
for the instance of the Cinderella-Stepmother game that we introduced in Sect.~\ref{sec:example}. The contract
is expressed in Lustre~\cite{lustrev6}, a language
that has been extensively used for specification as well as implementation of
safety-critical systems, and is the kernel language in SCADE, a popular tool in
model-based development. The contract is defined as a Lustre node \texttt{game}, with a global
constant \texttt{C} denoting the bucket capacity. The node describes the game itself,
through the problem's input and output variables. The main input is Stepmother's
distribution of one unit of water over five different input variables,
\texttt{i1} to \texttt{i5}. While the node contains a sixth input argument,
namely \texttt{e}, this is in fact used as the output of the system that we want to
implement, representing Cinderella's choice at each of her turns.

We specify the system's inputs \texttt{i1}, \ldots, \texttt{i5} using the \texttt{REALIZABLE} statement and define the contract's assumptions over them: $A(i_1, \ldots, i_5) = (\bigwedge_{k=1}^{5} i_k >= 0.0) \land (\sum_{k=1}^{5} i_{k} = 1.0)$. The assignment to boolean variable \texttt{guarantee} (distinguished via the \texttt{PROPERTY} statement) imposes the guarantee constraints on the buckets' states through the entire
duration of the game, using the local variables \texttt{b1} to \texttt{b5}.
Initially, each bucket is empty, and with each transition to a new state, the contents depend on
whether Cinderella chose the specific bucket, or an adjacent one. If so, the value of each \texttt{b}$_k$ at the the next turn becomes equal to the value of the corresponding input variable \texttt{i}$_k$. Formally, for the initial state, $G_{I}(C, b_1, \ldots, b_5) = (\bigwedge_{k=1}^{5} b_k = 0.0) \land (\bigwedge_{k = 1}^{5} b_k \le C)$, while the transitional guarantee is $G_T([C,b_1, \ldots, b_5, e], i_1, \ldots, i_5, [C',b_{1}', \ldots, b_{5}',e']) = (\bigwedge_{k=1}^{5} b_{k}' = ite(e = k \lor e = k_{prev}, i_k, b_k + i_k) \land (\bigwedge_{k=1}^{5} b_{k}' \le C')$, where $k_{prev} = 5$ if $k = 1$, and $k_{prev} = k - 1$ otherwise. Interestingly, the lack of explicit constraints over $e$, i.e. Cinderella's choice, permits the action of Cinderella skipping her current turn, i.e. she does not choose to empty any of the buckets. With the addition of the guarantee $(e = 1) \lor \ldots \lor (e =5)$, the contract is still realizable, and the implementation is verifiable, but Cinderella is not allowed to skip her turn anymore.

If the bucket was not covered by Cinderella's choice, then its contents are
updated by adding Stepmother's distribution to the volume of water that the
bucket already had. The arrow (\texttt{->}) operator distinguishes the initial state (on the left) from subsequent states (on the right), and variable values in the previous state can be accessed using the \texttt{pre} operator.
The contract should only be realizable if, assuming valid inputs given by the Stepmother
(i.e. positive values to input variables that add up to one water unit),
Cinderella can keep reacting indefinitely, by providing outputs that satisfy the
guarantees (i.e. she empties buckets in order to prevent overflow in Stepmother's next turn).
We provide the contract in Fig.~\ref{fg:cind} as input to  Alg.~\ref{alg:synthesis} which then iteratively attempts to construct a fixpoint of viable states, closed under the transition relation.

Initially $F = \mathit{true}$, and we query \aeval for the validity of formula $\forall i_1, \ldots,$ $i_5,b_1, \ldots, b_5 \,.\, A(i_1, \ldots, i_5) \Rightarrow \exists b'_1, \ldots, b'_5, e\,.\,G_{T}(i_1, \ldots, i_5,b_1, \ldots, b_5, b'_1,$ $\ldots, b'_5, e)$.
Since $F$ is empty, there are states satisfying $A$, for which there is no transition to $G_{T}$.
In particular, one such counterexample identified by \aeval is represented by the set of assignments $\mathit{cex} = \{\ldots,b_{4} = 3025, i_{4} = 0.2, b'_{4} = 3025.2, \ldots\}$, where the already overflown bucket $b_4$ receives additional water during the transition to the next state, violating the contract guarantees.
In addition, \aeval provides us with a region of validity $Q(i_1, \ldots, i_5,b_1, \ldots, b_5)$, a formula for 
which $\forall i_1, \ldots,$ $i_5,b_1, \ldots, b_5 \,.\, A(i_1, \ldots, i_5) \land Q(i_1, \ldots, i_5,b_1, \ldots, b_5) \Rightarrow \exists b'_1, \ldots, b'_5, e\,.\,G_{T}(i_1, \ldots, i_5,b_1, \ldots, b_5, b'_1,$ $\ldots, b'_5, e)$ is valid.
Precise encoding of $Q$ is too large to be presented in the paper; intuitively it contains some constraints on $i_1,\ldots,i_5$ and $b_1,\ldots,b_k$ which are stronger than $A$ and which block the inclusion of violating states such as the one described by $\mathit{cex}$.

Since $Q$ is defined over both state and
input variables, it might contain constraints over the inputs, which is an
undesirable side-effect.
%
In the next step, \aeval decides the validity of formula $\forall b_1, \ldots, b_5\,.\, \exists
i_1, \ldots, i_5 \,.\, A(i_1, \ldots, i_5) \land \lnot Q(i_1, \ldots, i_5,b_1, \ldots, b_5)$ and extracts a violating region $W$ over $b_1, \ldots, b_5$.
%
Precise encoding of $W$ is also too large to be presented in the paper; and intuitively it captures certain steps in which Cinderella may not take the optimal action. 
Blocking them leads us eventually to proving the contract's realizability. 

From this point on, the algorithm continues following the steps
explained above. 
In particular, it terminates after one
more refinement, at depth 2. At that point, the refined version of
$\phi$ is valid, and \aeval constructs a witness containing valid reactions to
environment behavior.
In general, the witness is described through the use of nested \textit{if-then-else} blocks, where the conditions are subsets of the antecedent of the implication in formula $\phi$, while the body contains valid assignments to
state variables to the corresponding subset.

\section{Implementation and Evaluation}
\label{sec:impl}

The implementation of the algorithm has been added to a branch of the  \jkind~\cite{jkind} model checker\footnote{The \jkind fork with \jsynvg is available at \url{https://goo.gl/WxupTe}.}.  \jkind officially supports synthesis
using a $k$-inductive approach, named \jsyn~\cite{KatisFGBGW16}. For clarity, we named
our validity-guided technique \jsynvg (i.e., validity-guided synthesis). \jkind uses Lustre~\cite{lustrev6} as its specification and implementation language.
%
\jsynvg encodes Lustre specifications in the language of
linear real and integer arithmetic (LIRA)
and communicates them to \aeval\footnote{The \aeval tool is available at~\url{https://goo.gl/CbNMVN}.}.
%
%
Skolem functions returned by \aeval get then translated 
into an efficient and practical implementation. To compare the quality of implementations against \jsyn, we use
\smtlibtoc, a tool that has been specifically developed to translate
  Skolem functions to C implementations\footnote{The \smtlibtoc tool is available at \url{https://goo.gl/EvNrAU}.}.


\subsection{Experimental results}
\label{sec:results}

We evaluated \jsynvg by synthesizing implementations
for 124 contracts%
\footnote{All of the benchmark contracts can be found at
\url{https://goo.gl/2p4sT9}.}
originated from a broad variety of contexts. Since we have been unable to find past work that contained benchmarks directly relevant to our approach, we propose a comprehensive collection of contracts that can be used by the research community for future advancements in reactive system synthesis for contracts that rely on infinite theories. Our benchmarks are split into three categories:
\begin{itemize}
\item 59 contracts correspond to various industrial projects, such as a Quad-Redundant Flight Control System, a Generic Patient Controlled Analgesia infusion pump, as well as a collection of contracts
for a Microwave model, written by graduate students as part of a software
engineering class;
\item 54 contracts were initially used for the verification of existing handwritten implementations~\cite{hagen2008scaling};
\item 11 models contain variations of the
Cinderella-Stepmother game, as well as examples that we created.
\end{itemize}
All of the synthesized implementations were verified against the original contracts using \jkind.

The goal of this experiment was to determine the performance and generality of the \jsynvg algorithm.  We compared against the existing \jsyn algorithm, and for the Cinderella model, we compared against~\cite{beyene2014constraint} (this was the only synthesis problem in the paper).  We examined the following aspects:
\begin{itemize}
    \item time required to synthesize an implementation;
    \item size of generated implementations in lines of code (LoC);
    \item execution speed of generated C implementations derived from the synthesis procedure; and
    \item number of contracts that could be synthesized by each approach.
\end{itemize}
\noindent Since \jkind already supports synthesis through \jsyn, we were able to directly
compare \jsynvg against \jsyn's $k$-inductive approach. We
ran the experiments using a computer with Intel Core i3-4010U 1.70GHz CPU and
16GB RAM.

A listing of the statistics that we tracked while running experiments is
presented in Table~\ref{tbl:stats}.
Fig.~\ref{fg:performance} shows the time allocated by \jsyn and \jsynvg to solve each problem, with \jsynvg
outperforming \jsyn for the vast majority of the benchmark suite, often times by a margin greater than
50\%. Fig.~\ref{fg:size} on the other hand, depicts small differences in the
overall size between the synthesized implementations. While it would be
reasonable to conclude that there are no noticeable improvements, the big picture is different:
solutions by \jsynvg always require just a single Skolem function, but solutions by \jsyn may require several ($k-1$ to initialize the system, and one for the inductive step).
In our evaluation, \jsyn proved the realizability of the majority of benchmarks by constructing proofs of length $k=0$, which essentially means that the entire space of states is an inductive invariant. 
However, several spikes in Fig.~\ref{fg:size} refer to benchmarks, for which \jsyn constructed a proof of length $k>0$, which was significantly longer that the corresponding proof by \jsynvg.
Interetsingly, we also noticed cases where \jsyn
implementations are (insignificantly) shorter. This provides us with another 
observation regarding the formulation of the problem for $k=0$ proofs. In
these cases, \jsyn proves the existence of viable states, starting from a set
of \textit{pre-initial} states, where the contract does not need to hold. This
has direct implications to the way that the $\forall\exists$-formulas are
constructed in \jsyn's underlying machinery, where the assumptions are ``baked''
into the transition relation, affecting thus the performance of \aeval.

\begin{table}[!t]
\centering
\begin{minipage}{\textwidth}
\centering
\caption{Benchmark statistics.}
\vspace{-1em}
\label{tbl:stats}
\resizebox{0.72\textwidth}{!}{%
\begin{tabular}{|l|c|c|}
\hline
 & \jsyn & \jsynvg \\ \hline 
Problems solved & 113 & \textbf{124} \\ \hline
Performance (avg - seconds) & 5.72 & \textbf{2.78} \\ \hline
Performance (max - seconds) & 352.1 & \textbf{167.55} \\ \hline
Implementation Size (avg - Lines of Code) & 72.88 & \textbf{70.66} \\ \hline
Implementation Size (max - Lines of Code) & 2322 & \textbf{2142} \\ \hline
Implementation Performance (avg - ms) & 57.84 & \textbf{56.32} \\ \hline
Implementation Performance (max - ms) & 485.88 & \textbf{459.95} \\ \hline
\end{tabular}
}
\end{minipage}
\begin{minipage}{\textwidth}
\centering
\caption{Cinderella-Stepmother results.}
\vspace{-1em}
\label{tbl:cindtbl}
\resizebox{0.8\textwidth}{!}{%
\begin{tabular}{|l|c|c|c|c|c|}
\hline
\multirow{2}{*}{Game} & \multicolumn{3}{c|}{\jsynvg} & \multicolumn{2}{c|}{\textsc{ConSynth}~\cite{beyene2014constraint}} \\ \cline{2-6}
 & \begin{tabular}[c]{@{}c@{}}Impl. Size\\ (LoC)\end{tabular} & \begin{tabular}[c]{@{}c@{}}Impl. Performance\\ (ms)\end{tabular} & Time & \begin{tabular}[c]{@{}c@{}}Time\\ (Z3)\end{tabular} & \begin{tabular}[c]{@{}c@{}}Time\\ (Barcelogic)\end{tabular} \\ \hline
Cind (C = 3) & 204 & 128.09 & 4.5s & \multirow{2}{*}{3.2s} & \multirow{2}{*}{1.2s} \\ \cline{1-4}
Cind2 (C = 3) & 2081 & 160.87 & 28.7s &  &  \\ \hline
Cind (C = 2) & 202 & 133.04 & 4.7s & \multirow{2}{*}{1m52s} & \multirow{2}{*}{1m52s} \\ \cline{1-4}
Cind2 (C = 2) & 1873 & 182.19 & 27.2s &  &  \\ \hline
\end{tabular}
}
\end{minipage}

\end{table}

\begin{figure}[!t]
\centering
\vspace{-5pt}
\hspace{-2em}
\subfloat[Performance of synthesizers]{
\includegraphics[width=2.15in]{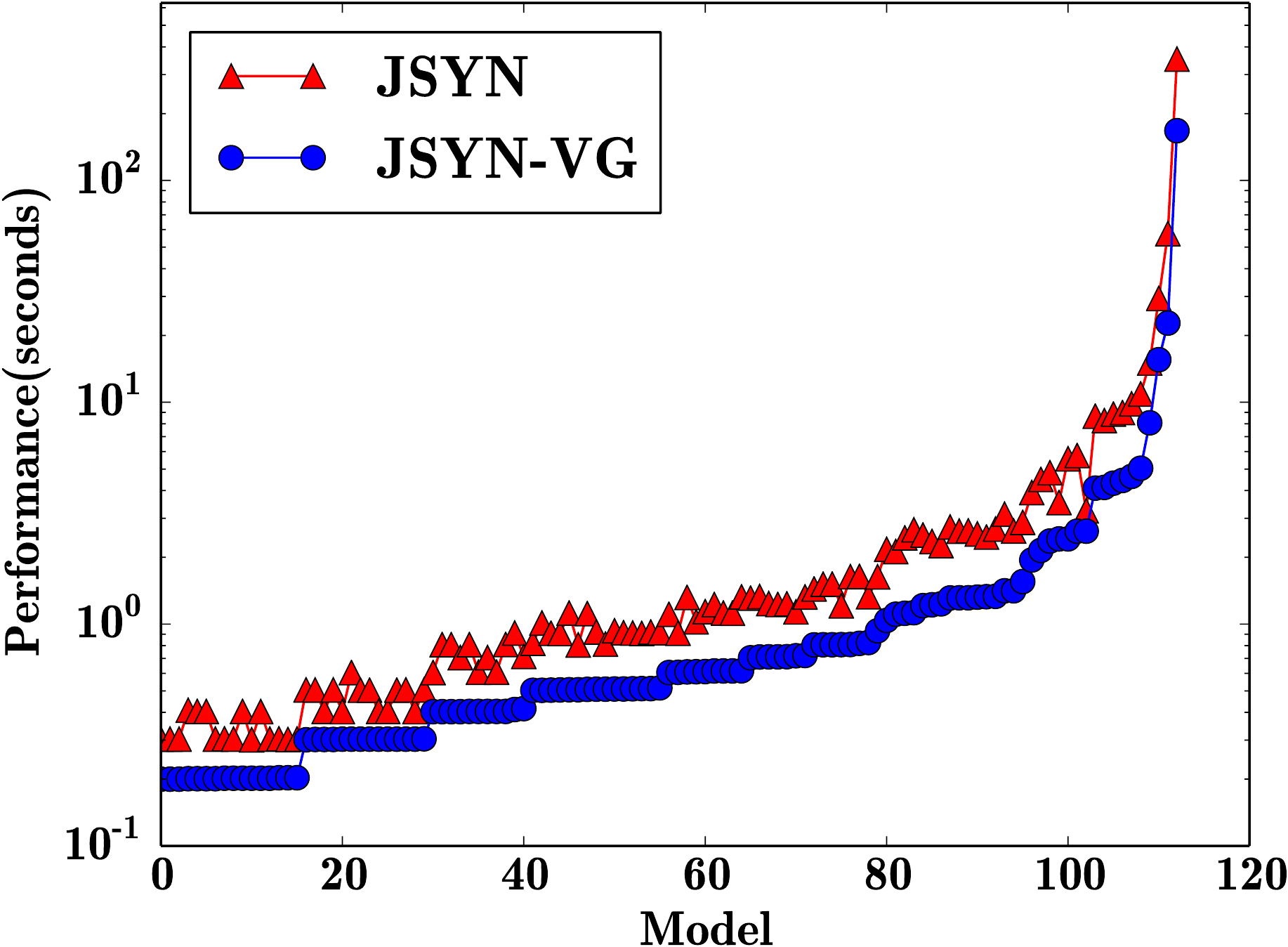}
\label{fg:performance}}
\quad
\subfloat[Size of implementations]{\includegraphics[width=2.15in]{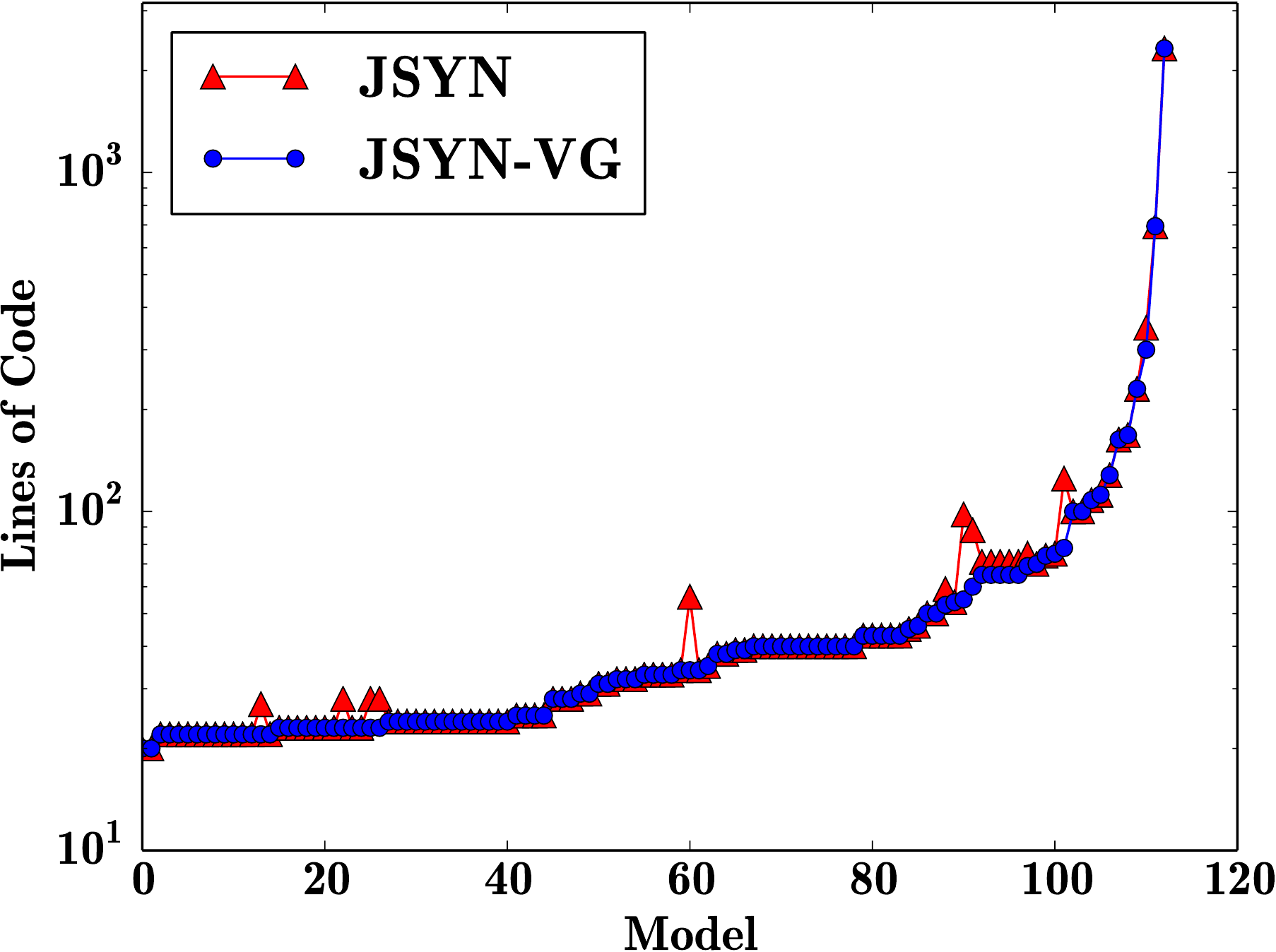}
\label{fg:size}}
\vspace{-5pt}
\subfloat[Performance of implementations]{
\hspace{-1em}
\includegraphics[width=2.15in]{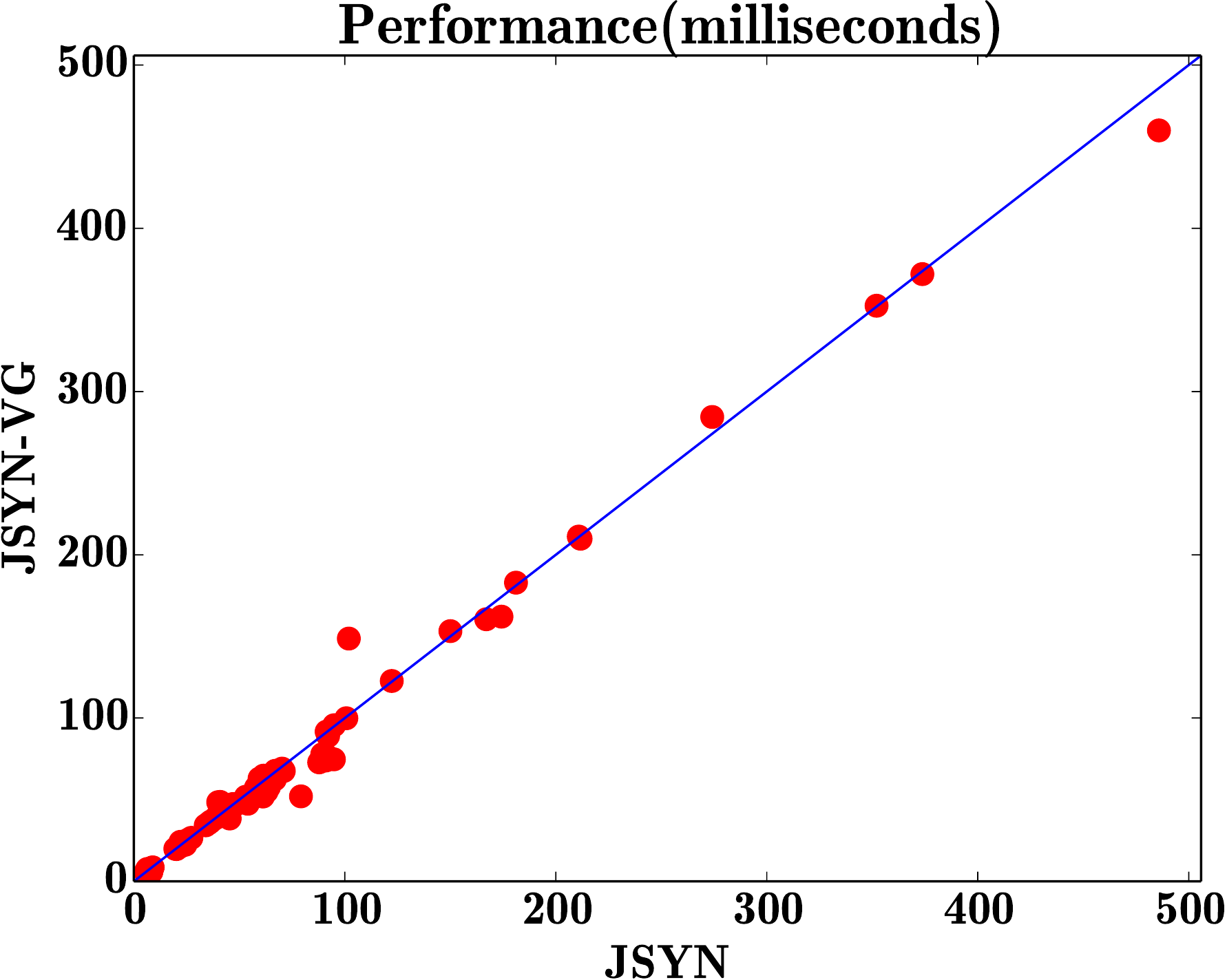}
\label{fg:implperformance}}
\caption{Experimental results.}
\vspace{-5pt}
\label{fg:results}
\end{figure}

One last statistic that we tracked was the performance of the synthesized C
implementations in terms of execution time, which can be seen in Fig.~\ref{fg:implperformance}. The performance was computed as the mean of 1000000 iterations of executing each implementation using random input values.
 According to the figure as well as Table~\ref{tbl:stats}, the differences are minuscule on average.

Fig.~\ref{fg:results} does not cover the entirety of the
benchmark suite. From the original 124 problems, eleven of them cannot be
solved by \jsyn's $k$-inductive approach.
Four of these files are variations of
the Cinderella-Stepmother game using different representations of the game, as well as two different values
for the bucket capacity (2 and 3). Using the variation in Fig.~\ref{fg:cind} as an input to \jsyn, we receive an ``unrealizable'' answer, with the counterexample shown
in Fig.~\ref{fg:cex}. Reading through the feedback provided by \jsyn, it is
apparent that the underlying SMT solver is incapable of choosing the correct
buckets to empty, leading eventually to a state where an overflow occurs for the
third bucket. As we already discussed though, a winning strategy exists for the
Cinderella game, as long as the bucket capacity \texttt{C} is between 1.5 and 3. This
provides an excellent demonstration of the inherent weakness of \jsyn
for determining unrealizability. \jsynvg's validity-guided approach,
is able to prove the realizability for these contracts, as
well as synthesize an implementation for each.

\begin{figure}[!t]
\centering
 \begin{Verbatim}[fontsize=\scriptsize]
	 ++++++++++++++++++++++++++++++++++++++++++++++++++++++++++
	      UNREALIZABLE || K = 6 || Time = 2.017s
	                 Step
	      variable      0    1      2      3      4      5
	      INPUTS
	      i1            0    0      0 0.416* 0.944* 0.666*
	      i2            1    0 0.083* 0.083*      0 0.055*
	      i3            0    1 0.305*    0.5 0.027* 0.194*
	      i4            0    0 0.611*      0      0 0.027*
	      i5            0    0      0      0 0.027* 0.055*
	
	      OUTPUTS
	      e             1    3      1      5      4      5
	
	      NODE OUTPUTS
	      guarantee   true true   true   true   true  false
	
	      NODE LOCALS
	      b1            0    0      0 0.416* 1.361* 0.666*
	      b2            0    0 0.083* 0.166* 0.166* 0.222*
	      b3            0    1 1.305* 1.805* 1.833* 2.027*
	      b4            0    0 0.611* 0.611*      0 0.027*
	      b5            0    0      0      0 0.027* 0.055*
	
	      * display value has been truncated
	 ++++++++++++++++++++++++++++++++++++++++++++++++++++++++++
 \end{Verbatim}
\vspace{-1.5em}
\caption{Spurious counterexample for Cinderella-Stepmother example using \jsyn}
\label{fg:cex}
\end{figure}

Table~\ref{tbl:cindtbl} shows how \jsynvg performed on the four contracts describing the Cinderella-Stepmother game. We used two different interpretations for the game, and exercised both for the cases where the bucket capacity  \texttt{C} is equal to 2 and 3. Regarding the synthesized implementations, their size is analogous to the complexity of the program (Cinderella2 contains more local variables and a helper function to empty buckets). Despite this, the implementation performance remains the same across all implementations. Finally for reference, the table contains the results from the template-based approach followed in \textsc{Consynth}~\cite{beyene2014constraint}. From the results, it is apparent that providing templates yields better performance for the case of $C = 3$, but our approach overperforms \textsc{Consynth} when it comes to solving the harder case of $C = 2$. Finally, the original paper for \textsc{Consynth} also explores the synthesis of winning strategies for Stepmother using the liveness property that a bucket will eventually overflow. While \jkind does not natively support liveness properties, we successfully synthesized an implementation for Stepmother using a bounded notion of liveness with counters. We leave an evaluation of this category of specifications for future work.

Overall, \jsynvg's validity-guided approach provides significant advantages
over the $k$-inductive technique followed in \jsyn, and effectively expands
\jkind's solving capabilities regarding specification realizability. On top of that, it provides an efficient ``hands-off'' approach that is capable of solving complex games.
The most significant contribution, however, is the applicability of this approach, as it is not tied to a specific environment since it can be extended to support more
theories, as well as categories of specification. 

\section{Related Work}
\label{sec:related}

The work presented in this paper is closely related to approaches that attempt
to construct infinite-state implementations. Some focus on the continuous
interaction of the user with the underlying machinery, either through the use of
templates~\cite{srivastava2013template,beyene2014constraint}, or environments where the user attempts to guide the solver by choosing reactions from a collection of different
interpretations~\cite{ryzhyk2016developing}.
In contrast, our approach is completely automatic and does not require
human ingenuity to find a solution.
Most importantly, the user
does not need to be deeply familiar with the problem at hand.

Iterative strengthening of candidate formulas is also used in abductive inference~\cite{dillig2013inductive}
of loop invariants.
Their approach generates candidate invariants as maximum universal subsets (MUS) of quantifier-free formulas of the form $\phi \Rightarrow \psi$.
While a MUS may be sufficient to prove validity, it may also mislead the invariant search, so the authors use a backtracking procedure that discovers new subsets while avoiding spurious results. By comparison, in our approach the regions of validity are maximal and therefore backtracking is not required.  More importantly, reactive synthesis requires mixed-quantifier formulas, and it requires that inputs are unconstrained (other than by the contract assumptions), so substantial modifications to the MUS algorithm would be necessary to apply the approach of~\cite{dillig2013inductive} for reactive synthesis.  

The concept of synthesizing implementations by discovering fixpoints was mostly
inspired by the IC3 / PDR~\cite{bradley2011sat,een2011efficient}, which was first introduced
in the context of verification. Work from Cimatti \textit{et al.} effectively
applied this idea for the parameter synthesis in the
\textsc{HyComp} model checker~\cite{DBLP:conf/fmcad/CimattiGMT13, cimatti2015hycomp}.
Discovering fixpoints to synthesize reactive designs was first
extensively covered by Piterman \textit{et al.}~\cite{piterman2006synthesis}
who proved that the problem can be solved in cubic time for the class of GR(1) specifications.
The algorithm requires the discovery of least fixpoints for the state variables,
each one covering a greatest fixpoint of the input variables. If the specification
is realizable, the entirety of the input space is covered by the greatest fixpoints.
In contrast, our approach computes a single greatest fixpoint over the system's outputs and avoids the partitioning of the input space.  As the tools use different notations and support different logical fragments, practical comparisons are not straightforward, and thus are left for the future.

More recently, Preiner \textit{et al}. presented work on model synthesis~\cite{preiner2017counterexample}, 
that employs a counterexample-guided refinement process~\cite{reynolds2015counterexample}
to construct and check candidate models.
Internally, it relies on enumerative learning, a syntax-based technique that enumerates expressions, checks their validity against
ground test cases, and proceeds to generalize the expressions by constructing larger ones.
In contrast, our approach is syntax-insensitive in terms of generating regions of validity.
In general, enumeration techniques such as the one used in \textsc{ConSynth}'s underlying E-HSF engine~\cite{beyene2014constraint} is not an optimal strategy for our class of problems, since the witnesses constructed for the most complex contracts are described by nested if-then-else expressions of depth (i.e. number of branches) 10-20, a point at which space explosion is difficult to handle since the number of candidate solutions is large.

\section{Conclusion and Future Work}
\label{sec:conclusion}


We presented a novel and elegant approach towards the synthesis
of reactive systems, using only the knowledge provided by the
system specification expressed in infinite theories.
The main goal is to converge to a fixpoint by iteratively blocking subsets of
unsafe states from the problem space. This is achieved through the continuous
extraction of regions of validity which hint towards subsets of states that
lead to a candidate implementation.

This is the first complete attempt, to the best of our knowledge, on handling
valid subsets of a $\forall\exists$-formula to construct a greatest fixpoint on specifications expressed using infinite theories. We were able to
prove its effectiveness in practice, by comparing it to an already existing
approach that focuses on constructing $k$-inductive proofs of realizability. We showed how the new algorithm performs better than the $k$-inductive approach, both in terms of performance as well as the soundness of results. In the future, we would like to extend the applicability of this algorithm to other areas in formal verification, such as invariant generation. Another interesting goal is to make the proposed benchmark collection available to competitions such as SYNTCOMP, by establishing a formal extension for the TLSF format to support infinite-state problems~\cite{DBLP:journals/corr/Jacobs016}. Finally, a particularly interesting challenge is that of mapping infinite theories to finite counterparts, enabling the synthesis of secure and safe implementations.

\section{Data Availability Statement}
\label{sec:artifact}

The datasets generated during and/or analyzed during the current study are available in the figshare repository: \url{https://doi.org/10.6084/m9.figshare.5904904}~\cite{katis2018tacasartifact}.

\bibliography{document}
\bibliographystyle{splncs03}
\end{document}